\documentclass[journal]{IEEEtran}
\usepackage{amsmath,amssymb}
\usepackage{bm}
\usepackage{makecell}
\usepackage{cite}
\usepackage{graphicx}
\usepackage{xcolor}
\usepackage{algorithm}
\usepackage{algorithmic}

\usepackage{color}
\usepackage{url}
\usepackage[hidelinks]{hyperref}

\newtheorem{theorem}{\textbf{Theorem}}

\newtheorem{assumption}{\textbf{Assumption}}

\newcommand{\R}{\mathbb{R}}
\newcommand{\Aset}{\mathcal{A}}
\newcommand{\Nset}{\mathcal{N}}
\newcommand{\Mset}{\mathcal{M}}
\newcommand{\Lset}{\mathcal{L}}

\newcommand{\Prop}{\mathcal{P}}
\newcommand{\Cphys}{\mathcal{C}_{\mathrm{phys}}}
\newcommand{\Coper}{\mathcal{C}_{\mathrm{oper}}}
\newcommand{\Hset}{\mathcal{H}}
\newcommand{\Jphys}{\mathcal{J}_{\mathrm{phys}}}
\newcommand{\Joper}{\mathcal{J}_{\mathrm{oper}}}
\newcommand{\Rset}{\mathcal{R}}
\newcommand{\Zcert}{\mathcal{Z}}

\begin{document}

\title{xTRUCE: A Provably Safe Arbiter for Multi-xApp Conflict Mitigation in Agentic O-RAN}
\author{
	Le Xia,~\IEEEmembership{Member,~IEEE},
	Rose Qingyang Hu,~\IEEEmembership{Fellow,~IEEE},
    Paul S. Kudyba,~\IEEEmembership{Student Member,~IEEE},\\
    Zhenlin An,~\IEEEmembership{Member,~IEEE},
	and Haijian Sun,~\IEEEmembership{Senior~Member,~IEEE}
	\thanks{
	Le Xia and Rose Qingyang Hu are with the Bradley Department of Electrical and Computer Engineering, Virginia Tech, Blacksburg, VA 24061, USA (e-mail: \{lexia, rosehu\}@vt.edu).
	
	Paul Kudyba, Zhenlin An, and Haijian Sun are with the School of Electrical and Computer Engineering, University of Georgia, Athens, GA 30602, USA (e-mail: \{paul.kudyba, zhenlin.an, hsun\}@uga.edu).
}
\thanks{The source code of xTRUCE will be released on GitHub upon publication.}
}	
\maketitle

\begin{abstract}
The open radio access network (O-RAN) is evolving toward agentic operation, where large language model (LLM)-driven xApps/rApps generate control proposals under operator intents.
However, such proposals may be conflicting, infeasible, or hallucinated, and no existing system jointly provides proposal-independent safety, priority-aware reconciliation, and traceable feedback.
To this end, we propose a provably safe arbiter, namely xTRUCE, in the near-real-time (Near-RT) RAN intelligent controller for mitigating multi-xApp conflicts in gNB control.
We first develop a structured xApp proposal interface and a three-layer constraint hierarchy that places physical limits and operator-defined rules above relaxable performance targets, alongside a dual-timescale control action space.
A two-stage arbitration mechanism then minimizes target shortfalls in the operator-priority order to finalize safe E2 actions within the Near-RT latency budget, while returning conflict certificates to xApps and the operator for renegotiation.
Finally, we implement xTRUCE in a multi-cell O-RAN use case, and evaluate its multi-process prototype through simulations with live API-backed LLM xApps and over-the-air experiments on OpenAirInterface/FlexRIC-based O-RAN stacks.
Results show that xTRUCE ensures gNB control safety with $100\%$ protected services despite severe proposal hallucinations, achieves priority-consistent performance satisfaction under overload, efficiently guides LLM intent renegotiation via certificates, and keeps a delay-safe E2 control loop.

\end{abstract}

\begin{IEEEkeywords}
		Agentic O-RAN, multi-xApp conflict mitigation, large language models (LLMs), Near-RT RIC, safe gNB control.
\end{IEEEkeywords}

\IEEEpeerreviewmaketitle
\bstctlcite{IEEEtranBSTcontrol}

\section{Introduction}\label{sec:intro}
\IEEEPARstart{T}{he} open radio access network (O-RAN) architecture is reshaping cellular networks on the road toward 6G, as it replaces closed and monolithic base stations with multi-vendor disaggregated units, open interfaces, and hierarchical RAN intelligent controllers (RICs)~\cite{polese2023understanding,polese2024empowering}.
Upon this architecture, network intelligence is implemented as pluggable control applications, among which the xApps hosted by the near-real-time (Near-RT, $10$ ms--$1$ s) RIC close measurement, decision, and action loops directly against the live RAN~\cite{bonati2021intelligence}.
Typically, each xApp operates one management objective, e.g., quality of service (QoS) assurance or energy saving, and such data-driven xApps have already been developed and validated at scale on open experimental platforms~\cite{polese2023coloran,tsampazi2025pandora}.
Notably, these xApps may be developed and supplied by different vendors, but they can observe and steer the same cells, users, and spectrum through the shared E2 interface of the Near-RT RIC layer.

Meanwhile, the rapid advancement of large language models (LLMs) is bringing the 6G vision of AI-native networks within reach, and recent studies anticipate that LLMs will undertake management duties ranging from translating operator intents to directly making gNB control decisions~\cite{letaief2019roadmap,zhou2025llmtelecom,bariah2024genai}.
Specifically, intent-based networking translates high-level operator goals into machine-actionable policies~\cite{leivadeas2023intent}, while agentic systems enable LLMs with autonomous reasoning capabilities to interpret these policies and generate the resulting management proposals across O-RAN functions and timescales~\cite{llmmultiagent2024,shao2024wirelessllm,maxenti2026autoran,agentran2026}.
Merging the two trends, the O-RAN ecosystem is evolving toward the agentic O-RAN, where LLM-driven non-real-time (Non-RT, $\geqslant 1$ s) rApp agents interpret operator intents and multiple LLM-driven Near-RT xApp agents generate proposals of gNB control decisions collaboratively and autonomously.
Nevertheless, the multi-xApp proposals may turn out mutually conflicting, physically infeasible, malformed, or simply hallucinated at run time, and the same concern has already arisen in scripted, learned, and third-party xApps~\cite{hoffmann2023open}.

In fact, inter-xApp conflicts have been a long-standing concern in the O-RAN even before the agentic evolution~\cite{del2025pacifista}, since a closed single-vendor stack reconciles its control logic at design time, whereas the xApps of O-RAN cannot be co-engineered due to their multi-vendor origins.
Being aware of this, the O-RAN specifications dedicatedly reserve a \emph{conflict mitigation} function inside the Near-RT RIC platform and outline its architectural considerations~\cite{oranwg3ricarch,oranConflictMitigation2024}.
However, they do not prescribe a per-decision method that jointly admits conflicting targets, relaxes the infeasible ones according to operator priorities, and certifies the resulting action.
Therefore, governing such untrusted intelligent agents at run time, instead of blindly trusting them, becomes the prerequisite for the agentic evolution to be deployable in the real-world O-RAN.

Lately, conflict handling in the O-RAN has attracted several noteworthy research efforts.
Adamczyk~\emph{et al.}~\cite{adamczyk2023conflict} categorized inter-xApp conflicts into direct, indirect, and implicit types, and devised a detection-centric mitigation framework for the Near-RT RIC.
Similarly, del Prever~\emph{et al.}~\cite{del2025pacifista} developed PACIFISTA, a framework that profiles candidate xApps on an experimental platform and statistically evaluates their conflicts before deployment, and their latest follow-up further predicted the key performance indicators (KPIs) impact of such conflicts from the same profiles~\cite{brachdelprever2026predicting}.
For run-time reconciliation, Wadud~\emph{et al.} developed QACM in~\cite{wadud2024qacm} to select the values of conflicting parameters according to the QoS requirements of the involved xApps, and further mitigated a concrete conflict between energy saving and mobility management in the same spirit in~\cite{conflictenergy2025}.
From the orchestration perspective, D'Oro~\emph{et al.}~\cite{doro2024orchestran} proposed OrchestRAN to decide which intelligence runs where, preventing two models from steering the same parameter by construction.
All these efforts confirm both the severity and the industrial relevance of the problem.

In parallel, there have been some related works addressing the safety control issue of data-driven, standalone xApps in O-RAN.
Nagib~\emph{et al.}~\cite{nagib2024safedrl} stabilized a deep reinforcement learning (DRL)-based slicing xApp against unseen network conditions through hybrid transfer learning, and their follow-up SafeSlice~\cite{nagib2025safeslice} further added a safety layer that projects each resource-allocation action onto the nearest latency-safe one.
Hossen~\emph{et al.}~\cite{opentwin2026} replayed the closed control loop on a digital twin, such that an xApp is exercised against realistic KPIs before it touches the live RAN.
In addition, Fiandrino~\emph{et al.}~\cite{fiandrino2023explora} attached graph-based explanations to the decisions of DRL xApps, letting the operator inspect and steer a running agent.
However, these safeguards only target single-xApp O-RAN scenarios and are not capable of arbitrating conflicting proposals from multiple xApps at runtime, let alone handling LLM-driven agents that might generate hallucinated proposals.
To the best of our knowledge, no existing work delivers a unified solution that turns conflicting and infeasible multi-xApp proposals into a verified safe gNB action while balancing operator priorities and Near-RT latency budget.

In summary, we are encountering the following three fundamental challenges when pursuing such a realization:
\begin{itemize}
\item \emph{Challenge 1: How to guarantee gNB control safety under arbitrary, untrusted, and possibly hallucinated proposals from LLM-driven xApp agents?}
The content of multiple xApp proposals may be well-formed and authorized yet remain infeasible or mutually conflicting~\cite{wadud2024qacm}.
Hence, safety must be enforced in the Near-RT RIC platform and be entirely independent of the proposal contents, while ensuring that every executed gNB action satisfies the physical limits and operator-defined rules simultaneously.
\item \emph{Challenge 2: How to reconcile jointly infeasible KPI targets in alignment with operator priorities and provide actionable feedback for proposal renegotiation?}
When multi-xApp proposals conflict, their KPI targets may not be jointly feasible under the current RAN capacity and operating constraints, and some targets must therefore be relaxed according to operator-defined priorities.
Especially for some hard targets, their resulting shortfalls and limiting constraints are essential to be reported in a machine-readable form to guide proposal renegotiation. 
\item \emph{Challenge 3: How to issue a safe gNB action within the Near-RT latency budget over a mixed discrete and continuous action space?}
Configuration gNB control decisions, such as cell activation and user re-association, evolve on a slower timescale than the regular radio resource decisions, which makes the joint optimization combinatorial.
Moreover, the solver of action arbitration may overrun the required delay demand, and thus a safe action must be decided by the decision deadline.
\end{itemize}

\begin{figure*}[!t]
\centering
\includegraphics[width=0.97\textwidth]{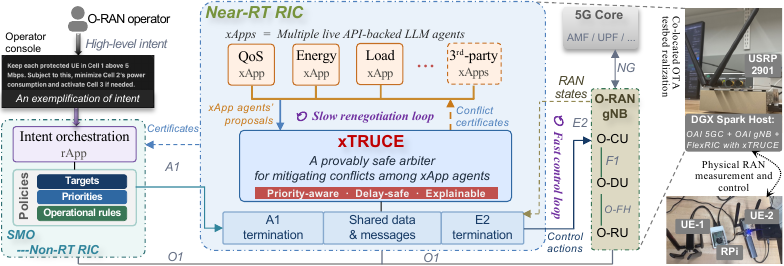}
\caption{An overview of xTRUCE-enabled agentic O-RAN control architecture, alongside its co-located OTA testbed realization.}
\label{fig:arch}
\end{figure*}
To address the above challenges for agentic O-RAN, in this paper, we propose the \underline{x}App \underline{T}arget \underline{R}elaxation \underline{U}nder \underline{C}onflicts with delay-safe \underline{E}xecution, namely \textit{xTRUCE}, in the Near-RT RIC with provable guarantees.
With xTRUCE, xApp agents influence the gNB control exclusively through structured proposals, and the core arbiter computes a safe executed action and a conflict certificate at each decision epoch.
In tests with multiple conflicting xApps, xTRUCE preserves all physical and operator-defined rigid limits, while allocating inevitable KPI-target shortfalls according to operator priority.
It also guides live LLM xApp agents to successfully correct infeasible intents within only $3$ rounds of certificate feedback, leaving targets unmet in only $24$--$29\%$ of rounds versus $79$--$99\%$ for the benchmark modes.
Further on the over-the-air (OTA) O-RAN testbed, a safe action is always output in every $100$-ms runtime epoch, even when arbitration occasionally exceeds the time threshold under severely hallucinated proposals.
In a nutshell, our main contributions are summarized below.
\begin{itemize}
\item We first place xTRUCE in the conflict-mitigation path of the Near-RT RIC, where a structured proposal gateway confines every untrusted xApp to machine-readable proposals rather than direct RAN control.
We further organize physical limits, operator-defined rules, and xApps' KPI targets into a three-layer constraint hierarchy, in which every executed gNB action obeys the two rigid layers and only the KPI-target layer is relaxable.
\item We then develop a two-stage priority-aware arbitration to reconcile conflicting KPI targets.
Specifically, Stage~I sequentially minimizes hard-target relaxations according to operator-defined priorities, while Stage~II finalizes a safe gNB action via the E2 control loop.
Here, we theoretically prove this priority guarantee and show that Stage~I introduces no relaxation whenever all hard targets are jointly feasible by Theorem~\ref{th:priority}.
Afterward, the resulting target shortfalls and limiting-constraint prices are collected in a machine-readable conflict certificate and returned to guide xApps and the operator on proposal renegotiation.
These address Challenges~1 and~2.
\item We identify a dual-timescale action rule and devise a delay-safe action execution mechanism that always retains a verified action to support slow configuration updates and fast resource control under the Near-RT delay limit.
As such, even when the two-stage arbitration does not finish in time, a safe action remains available for RAN control.
We further instantiate a multi-cell O-RAN control use case, in which xTRUCE is theoretically proved to provision safe actions and certificates with exact prices by Theorem \ref{th:convex}.
These tackle Challenge~3.
\item To validate the deployability of xTRUCE, we evaluate its multi-process prototype in Python-based simulations with independently attached live ChatGPT-5.6 and Claude-Sonnet-5 agents, and on an OTA O-RAN-compliant testbed supported by OpenAirInterface (OAI) gNB and Flexible RIC (FlexRIC).
The results verify its proposal-independent gNB control safety, priority-consistent KPI satisfaction, certificate-guided LLM renegotiation, and Near-RT-delay-safe execution over the E2 interface loop.
\end{itemize}

The remainder of this paper is organized as follows.
Section~\ref{sec:arch} first introduces xTRUCE to the O-RAN alongside its design requirements.
Section~\ref{sec:gov} then establishes its system model and mathematical guarantees.
In Section~\ref{sec:usecase}, xTRUCE is instantiated for multi-cell O-RAN control.
Finally, simulation and OTA experimental results are demonstrated and analyzed in Section~\ref{sec:eval}, followed by the conclusions in Section~\ref{sec:concl}.

\section{xTRUCE for Agentic O-RAN}\label{sec:arch}

\subsection{Agentic O-RAN Control Path}
\label{sec:arch-oran}
Fig.~\ref{fig:arch} elaborates the deployment of xTRUCE within an end-to-end control path of agentic O-RAN.
It begins with the O-RAN operator expressing a high-level intent in natural language through the operator console, which is processed by the service management and orchestration (SMO) framework.
The SMO hosts the Non-RT RIC, which has an intent-orchestration rApp to precisely translate this intent into machine-actionable policy guidance that specifies the KPI targets, operator priorities, and operational rules.
Then, data regarding these policies are shared with multiple xApps in the Near-RT RIC via the A1 termination~\cite{etsiA1Gap2025}.
To be agentic and intelligent, the rApp and xApps employ the authorized APIs to access live LLM agents (e.g., Claude-Sonnet-5~\cite{anthropic2026sonnet5} and ChatGPT-5.6~\cite{openai2026gpt56terra}), and they also support local scripting- and learning-based agents.
Meanwhile, the Near-RT RIC communicates with the open central unit (O-CU) and distributed unit (O-DU) via the E2 interface and controls the open radio unit (O-RU).
Specifically, E2 indication entering the Near-RT RIC carries the measurements of RAN states, while E2 control leaving it delivers the gNB control decisions through applicable E2 service models (E2SMs)~\cite{etsiE2Gap2024}.
Additionally, the O1 interface serves management and persistent configuration, and the NG interface connects the gNB to the 5G Core~\cite{polese2023understanding}.

Besides hosting the xApps, the Near-RT RIC platform also provides a series of shared functions, among which the O-RAN specifications explicitly identify \emph{multi-xApp conflict mitigation} as a core platform capability and outline its architectural considerations~\cite{oranwg3ricarch,oranConflictMitigation2024}.
However, its standards still leave open which limits are never relaxed, how infeasible targets degrade along the operator priorities, what gNB action goes out if the arbitration hits the Near-RT latency budget, or what should be reported back to the xApps.
In view of this, xTRUCE fills exactly this gap as an indispensable conflict-arbitration stage standing before the E2 control decision is made.
Note that xTRUCE's scope only covers the E2 control-related conflicts among agentic xApps based on the same operator intent, whereas the issues outside this, such as xApp authentication or multi-rApp conflicts, are not considered.

\subsection{xTRUCE Workflow and Design Goals}
\label{sec:arch-loops}
Before xTRUCE works, each xApp first performs inference on the Non-RT policies received via A1 and the RAN states received via E2 in order to generate a proposal for a certain gNB management objective, e.g., QoS assurance, energy saving, or load balancing~\cite{opentwin2026}.
Herein, no xApp directly makes any control decisions, and instead, each xApp submits a machine-readable proposal through a \emph{proposal gateway}, which is essentially an intra-RIC interface distinct from both A1 and the E2SMs.
In detail, the proposal gateway checks the identity, authorized scope, schema, and validity period of the proposals of each xApp, and any malformed, expired, or out-of-scope requests will be discarded before arbitration.
It is also worth pointing out that whatever proposal passes these checks still remains untrusted and must be subsequently arbitrated through xTRUCE, since it may be individually infeasible, mutually conflicting, or simply wrong due to agentic hallucination.

\begin{figure*}[!t]
\centering
\includegraphics[width=0.85\textwidth]{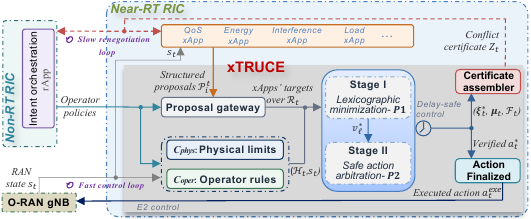}
\caption{The proposed xTRUCE for safe multi-xApp arbitration, delay-safe RAN control, and certificate-guided conflict renegotiation in the Near-RT RIC.}
\label{fig:xTRUCE}
\end{figure*}
Within each arbitration epoch of xTRUCE, the latest valid xApp proposals are first collected, and one best safe action is then computed and issued via E2.
In parallel, xTRUCE returns a structured \emph{conflict certificate} to xApps and the operator, stating which KPI targets were met, which were relaxed and by how much, and which limits were hindering further improvement.
As such, two closed loops emerge in the xTRUCE workflow, as marked in Fig.~\ref{fig:arch}.
The fast control loop starts with receiving RAN states and ends with emitting gNB control actions in each arbitration epoch.
The slow renegotiation loop is an asynchronous feedback path, where the certificates let the xApp agents revise unrealistic targets and give the operator a concise account of persistent conflicts.
For delay-safe consideration, an accepted xApp proposal stays valid over a declared time window, and xTRUCE always decides based on the latest valid proposals and never waits for an LLM response on delay-sensitive services.

Accordingly, the design and development of xTRUCE in agentic O-RAN aims at four specific goals as follows:
\begin{itemize}
\item \textbf{Proposal-independent gNB control safety:} Every gNB control action output from the arbiter must not violate the modeled physical limits and operator-defined rules, independently of the contents of all admitted xApp proposals.
\item \textbf{Priority-consistent KPI satisfaction:} When KPI targets cannot all be met under overload, their shortfalls are minimized based on operator-assigned priorities to satisfy high-priority performance targets as much as possible.
\item \textbf{Multi-xApp conflict certificate feedback:} At each arbitration epoch, xTRUCE must return a machine-readable conflict certificate for renegotiation.
The certificate reports the target relaxations and limiting constraints when a new arbitration result is available, or indicates that no new optimality information is available after a timeout.
\item \textbf{Near-RT RIC delay-safe action execution:} A verified control action must be available and issued within each epoch, even when the complete arbitration does not finish within the Near-RT latency budget (less than $1$s).
\end{itemize}

\section{Proposed xTRUCE System}\label{sec:gov}
This section details the xTRUCE system design in Fig.~\ref{fig:xTRUCE}, covering its control abstraction, constraint hierarchy, two-stage arbitration, certificate feedback, and delay-safe guarantee.

\subsection{Control Actions and Structured xApp Proposals}
\label{sec:gov-abs}
Time is equally divided into discrete, consecutive epochs $t=0,1,2,\cdots$, and at each epoch $t$, the physical RAN state\footnote{The state encompasses channel gains, queue backlogs, and traffic arrivals, etc., which are objective conditions that can be measured in near real time.}, denoted as $s_t$, is known to the Near-RT RIC layer via the E2 interface.
Let $\Nset=\{1,\cdots,N\}$ be the set of xApp agents deployed, where each agent $i \in \Nset$ operates towards a specific KPI and generates a corresponding proposal when needed.
After processing the valid proposals, xTRUCE issues a gNB control action, denoted as $a_t\in\Aset$, where $\Aset$ is an action space containing finite numbers of control variables, covering transmit power/spectrum allocation, cell on/off states, and cell-to-user steering.
To accommodate different control timescales, we split $a_t$ into $a_t=\big(a_t^{\mathrm{cfg}},\,a_t^{\mathrm{rt}}\big)$.
Here, $a_t^{\mathrm{cfg}}$ is the set of slowly reconfigured variables that are updated every $T^{\mathrm{cfg}}$ ($T^{\mathrm{cfg}} > 1$) epochs and stay constant in between, while $a_t^{\mathrm{rt}}$ stands for the set of variables that should be redetermined at every epoch $t$.

In addition, let $\Mset=\{1,\cdots,M\}$ index the KPIs exposed by the O-RAN, and their realized values are given by
\begin{equation}
\phi(a_t,s_t)=\big(\phi_1(a_t,s_t),\cdots,\phi_M(a_t,s_t)\big)^{T},
\label{eq:kpimap}
\end{equation}
where $\phi_m(a_t,s_t)$ is the performance value regarding KPI $m$ resulting from executing action $a_t$ under RAN state $s_t$.

Consider that at epoch $t$, each xApp agent $i$ submits its requests only through a \emph{structured proposal} in the form of
\begin{equation}
\Prop_i^{t} =\big\{\big(\theta_{i,m},\,\mathrm{type}_{i,m}\big)\big\}_{m\in\Mset_i}.
\label{eq:proposal}
\end{equation}
Specifically, $\Mset_i\subseteq\Mset$ confines the scope of KPIs for which agent $i$ is responsible, $\theta_{i,m}\in\R$ is its requested performance target of KPI $m$, and $\mathrm{type}_{i,m}\in\{\mathrm{hard},\mathrm{soft}\}$ identifies a hard target or a soft preference.
For illustration, let $\Rset_t$ collect the set of $(i,m)$ pairs covered by valid proposals at epoch $t$.
Notably, each proposal expires unless renewed, and any malformed, expired, or out-of-scope proposals will be discarded by the proposal gateway,\footnote{Agent identity and authorization metadata can be attached to \eqref{eq:proposal} and checked by the same gateway~\cite{etsiORANSecurity2025}. Relevant agentic security research is beyond the scope of this paper and will not be discussed in depth.} as shown in Fig.~\ref{fig:xTRUCE}.
We further assume that the operator assigns each hard KPI target (i.e., $\theta_{i,m}$ with $\mathrm{type}_{i,m}=\mathrm{hard}$) a fixed priority class $\ell_{i,m}\in\Lset=\{1,\cdots,L\}$, where a smaller $\ell_{i,m}$ denotes a higher priority for target satisfaction and no xApp can change the ordering.

Given $\Prop_i^{t}$ at epoch $t$, xTRUCE outputs action $a_t$ in the current RAN state $s_t$, thus yielding a performance shortfall function $g_{i,m}(a_t,s_t)$ to measure the gap between the expected and the actual performance of KPI $m$. That is, $\forall (i,m)\in \Rset_t$,
\begin{equation}
g_{i,m}(a_t,s_t)=d_m\big(\theta_{i,m}-\phi_m(a_t,s_t)\big).
\label{eq:req}
\end{equation}
Here, $d_m\in\{1,-1\}$ is the preference direction, where $d_m=1$ means that a larger $\phi_m(a_t,s_t)$ is better (e.g., rate) and $d_m=-1$ the opposite (e.g., energy).
Obviously, $g_{i,m}(a_t,s_t)\leqslant 0$ holds only when the KPI target is met, while a positive $g_{i,m}(a_t,s_t)$ equals the shortfall by which the target is missed.

Different from existing agentic O-RAN designs, which implicitly trust the agentic reasoning of xApps, our framework is built on the opposite premise by the following assumption:
\begin{assumption}[Untrusted agentic proposals]\label{as:untrusted}
No condition is imposed on xApp proposals admitted by the gateway.
They may be individually unachievable, mutually contradictory, or wrong, and xTRUCE never relies on their quality.
\end{assumption}

\subsection{Three-Layer Constraint Hierarchy}\label{sec:gov-layers}
The action $a_t$ output from xTRUCE at epoch $t$ must obey three layers of constraints.
The first layer contains physical RAN constraints, such as a transmit power budget, that cannot be relaxed by any xApp or operator intent.
In this case, given any network state $s_t$, we define the set of admitted actions as
\begin{equation}
\Cphys(s_t)=\big\{a_t\in\Aset \mid c_j(a_t,s_t)\leqslant 0,\ \forall j\in\Jphys\big\},
\label{eq:cphys}
\end{equation}
where $\Jphys$ indexes all preset physical constraints, and $c_j(a_t,s_t)$ is the formulation of $j$-th constraint rewritten by moving all its terms to the left-hand side, thus a positive $c_j(a_t,s_t)$ is exactly the shortfall at limit $j$.
The second-layer constraint is derived from rigid operational rules of O-RAN to maintain network availability, such as a guaranteed rate for protected users, which can only be modified by the operator and may involve the action executed at the previous epoch, denoted as $a_{t-1}^{\mathrm{exec}}$.
Likewise, its admitted action set is
\begin{equation}
\Coper(s_t,a_{t-1}^{\mathrm{exec}})\!=\!\{a_t\!\in\!\Aset \mid e_j(a_t,s_t,a_{t-1}^{\mathrm{exec}})\leqslant 0,\forall j\in\Joper\},
\label{eq:coper}
\end{equation}
where $\Joper$ indexes all preset operational constraints, and $e_j(a_t,s_t,a_{t-1}^{\mathrm{exec}})$ indicates the shortfall of its $j$-th constraint.

Together, the two rigid layers confine a safe action set
\begin{equation}
\Hset_t=\Cphys(s_t)\cap\Coper(s_t,a_{t-1}^{\mathrm{exec}}),
\label{eq:safeset}
\end{equation}
such that each action $a_t \in \Hset_t$ satisfies the physical and operator limits independently of the xApp proposals. Here, $\Hset_t$ is assumed to be compact.
In addition, we recognize the third-layer constraints by incorporating the KPI targets of xApps.
Under Assumption~\ref{as:untrusted}, their target inequalities need not be jointly feasible and are therefore not included in $\Hset_t$, but will be taken into account during subsequent processing.
Clearly, the first two layers decide which actions are admissible, while the third determines which admissible action is preferred.

To ensure that a valid E2 control action exists at every epoch, we define a baseline action as $a_t^{\mathrm{base}}=\pi_{\mathrm{base}}(s_t,a_{t-1}^{\mathrm{exec}})$, which retains the configuration component of $a_{t-1}^{\mathrm{exec}}$ and computes a conservative real-time component $a_t^{\mathrm{rt}}$ satisfying both rigid layers in observation of $s_t$.
In this way, $a_t^{\mathrm{base}}$ is easy to obtain and serves as the floor action whenever xTRUCE cannot output a better action in time.
Furthermore, we make the following assumption:
\begin{assumption}[Certified baseline action]\label{as:baseline}
At each epoch $t$, let $a_t^{\mathrm{base}}\in\Hset_t$, which also implies that $\Hset_t\neq\varnothing$.
\end{assumption}

By Assumption~\ref{as:baseline}, the joint action feasibility between the physical limits and the operator rules is guaranteed.

\subsection{Two-Stage Priority-Aware Arbitration}\label{sec:gov-arb}
When conflicts happen, the hard targets of multiple xApps may no longer be jointly met within $\Hset_t$, hence xTRUCE must decide how much intent to give up following the operator's preset priorities.
To this end, we devise a two-stage arbitration mechanism, where Stage I determines the extent to which the hard targets need to be relaxed from highest to lowest priority (i.e., $\ell_{i,m}$), and Stage II finalizes which safe action to execute.

\subsubsection{Stage I}
For each hard-target pair $(i,m)\in\Rset_t$, we first introduce a non-negative slack $\xi_{i,m}$ through $g_{i,m}(a_t,s_t)\leqslant\xi_{i,m}$, thus $\xi_{i,m}$ exactly equals the shortfall forced to accept.
With this, the overall shortfall across all hard-target pairs at the $\ell$-th priority class (i.e., $\ell_{i,m}=\ell$) is now measured by
\begin{equation}
V_\ell(\bm{\xi})=\sum_{(i,m)\in\Rset_t:\,\ell_{i,m}=\ell,\ \mathrm{type}_{i,m}=\mathrm{hard}}
\beta_{i,m}\,\xi_{i,m}^{2},
\label{eq:classviol}
\end{equation}
where $\beta_{i,m}>0$ is the operator-assigned target weight, stating the relative importance of different targets within the same priority class.
Clearly, $V_\ell(\bm{\xi})=0$ holds only when each hard target in class $\ell$ is met, while the design of squared slacks can discourage concentrating unavoidable shortfall on one target.

At each epoch $t$, Stage~I then solves a lexicographic violation optimization problem as follows:
\begin{align}
\mathbf{P1}:\ \operatorname*{lex\,min}_{a_t\in\Hset_t,\ \bm{\xi}\geqslant 0}\quad
&\big(V_1(\bm{\xi}),\,V_2(\bm{\xi}),\,\cdots,\,V_L(\bm{\xi})\big)\label{P1}\\
{\rm s.t.}\quad
&g_{i,m}(a_t,s_t)\leqslant\xi_{i,m},\ \forall\,\mathrm{hard}\ (i,m)\in\Rset_t.\tag{\ref{P1}a}
\end{align}
Lexicographic minimization stipulates that $V_1(\bm{\xi})$ is minimized first, then holding that value while proceeding to solve $V_2(\bm{\xi})$, and so on, until $V_L(\bm{\xi})$~\cite{ehrgott2005multicriteria}.
When minimizing $V_\ell(\bm{\xi})$, each higher-priority $j<\ell$ is constrained by $V_j(\bm{\xi})\leqslant v_j^{\star}$, and the attained optimum is recorded as $v_\ell^{\star}$.
Note that $\Hset_t$ is never relaxed at any point in the above process, and only xApps' KPI targets are relaxed via $\bm{\xi}$.
Moreover, solving $\mathbf{P1}$ does not ensure a unique optimal action, as many $a_t\in\Hset_t$ may reach the optimum at the same time, e.g., in a case where all hard targets jointly fit, the corresponding actions yield $V_\ell(\bm{\xi})=0$.

\subsubsection{Stage II}
In this stage, xTRUCE is to select a best control action from the action solutions to $\mathbf{P1}$ by minimizing the shortfalls of all soft targets from xApp proposals as well as the inter-epoch action change.
Its optimization problem becomes
\begin{align}
\mathbf{P2}:\ \min_{a_t\in\Hset_t,\ \bm{\xi}\geqslant 0}\
&\sum_{(i,m)\in\Rset_t:\,\mathrm{type}_{i,m}=\mathrm{soft}}\beta_{i,m}
\big[g_{i,m}(a_t,s_t)\big]_{+}^{2}\notag\\
&\qquad\qquad+\eta\,D\big(a_t,a_{t-1}^{\mathrm{exec}}\big)\label{P2}\\
{\rm s.t.}\quad
&g_{i,m}(a_t,s_t)\leqslant\xi_{i,m},\ \forall\,\mathrm{hard}\ (i,m)\in\Rset_t,
\tag{\ref{P2}a}\\
&V_\ell(\bm{\xi})\leqslant v_\ell^{\star}+\epsilon_\ell,\ \forall
\ell\in\Lset.\tag{\ref{P2}b}
\end{align}
Here, $[\,\cdot\,]_{+}=\max\{\cdot\,,0\}$ extracts the shortfall of each soft-target pair $(i,m)\in\Rset_t$, $D(a_t,a_{t-1}^{\mathrm{exec}})\geqslant0$ measures the action change,\footnote{The explicit form of $D(a_t,a_{t-1}^{\mathrm{exec}})$ may vary in different use cases, e.g., it is instantiated as the squared Euclidean distance in the following Section~\ref{sec:usecase}.} and $\eta\geqslant0$ is an operator-assigned weight to control the magnitude of action change.
Constraints (\ref{P2}a) and (\ref{P2}b) require each priority's shortfall to stay within $v_\ell^{\star}$ up to a non-negative tolerance $\epsilon_\ell$, thereby preserving the Stage-I priority ordering.
Under Assumption~\ref{as:baseline} and continuity of the functions in $\mathbf{P1}$ and $\mathbf{P2}$ on the compact set $\Hset_t$, both stages can attain their optima, and the Stage-II action is denoted by $a_t^{\star}$.

Moreover, xTRUCE keeps the configuration $a_t^{\mathrm{cfg}}$ fixed between configuration epochs, while at every $T^{\mathrm{cfg}}$-th epoch, it evaluates the operator-prescribed configuration candidates through $\mathbf{P1}$ and selects the one with the lexicographically smallest $(v_1^{\star},\cdots,v_L^{\star})$.
Then, $\forall a_t\in\Hset_t$, let $V_\ell(a_t)$ denote \eqref{eq:classviol} with $\xi_{i,m}$ substituted by $\xi_{i,m}=[g_{i,m}(a_t,s_t)]_{+}$.
The resulting priority behavior of the two-stage arbitration is stated below:
\begin{theorem}[Two-stage lexicographic minimization]\label{th:priority}
The Stage-I output $(v_1^{\star},\cdots,v_L^{\star})$ is the unique lexicographically minimal vector over $\Hset_t$, and $\forall a_t\in\Hset_t$, either $V_\ell(a_t)=v_\ell^{\star}$, $\forall \ell$, or $V_\ell(a_t)>v_\ell^{\star}$ at the highest priority class $\ell$ where they differ.
Moreover, for arbitrary $\epsilon_\ell\geqslant0$, the Stage-II output satisfies $V_\ell(a_t^{\star})\leqslant v_\ell^{\star}+\epsilon_\ell$, $\forall\ell$, where equality holds if all $\epsilon_{\ell}=0$.
\end{theorem}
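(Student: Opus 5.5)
The first step is to eliminate the slack variables, which turns Stage~I into a lexicographic minimization over $\Hset_t$ alone. Fix $a_t\in\Hset_t$. Constraint (\ref{P1}a) together with $\bm{\xi}\geqslant0$ says exactly that $\xi_{i,m}\geqslant[g_{i,m}(a_t,s_t)]_+$ for every hard pair. Each $V_\ell$ is a sum of terms $\beta_{i,m}\xi_{i,m}^2$ with $\beta_{i,m}>0$, so it is nondecreasing in every slack on $\bm{\xi}\geqslant0$. Taking $\xi_{i,m}=[g_{i,m}(a_t,s_t)]_+$ therefore minimizes all $V_\ell$ simultaneously for that $a_t$. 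Consequently the feasible value vectors of $\mathbf{P1}$ are componentwise no smaller than the vectors $(V_1(a_t),\cdots,V_L(a_t))$, $a_t\in\Hset_t$, and both sets have the same lexicographic minimum.

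The next step is attainment. I would proceed by induction on $\ell$. Define $S_0=\Hset_t$ and $S_\ell=\{a_t\in S_{\ell-1}: V_\ell(a_t)\leqslant v_\ell^\star\}$, where $v_\ell^\star=\min_{S_{\ell-1}}V_\ell$. The map $a_t\mapsto V_\ell(a_t)$ is continuous, being a composition of the continuous $g_{i,m}$ with $[\cdot]_+$ and squaring. Hence, by compactness of $\Hset_t$ and the nonemptiness supplied by Assumption~\ref{as:baseline}, each minimum is attained. Each $S_\ell$ is a closed subset of a compact set and nonempty, so the induction continues. The vector $(v_1^\star,\cdots,v_L^\star)$ is then the value vector of any $a_t\in S_L$, and it is the lexicographic minimum of $\{V(a_t):a_t\in\Hset_t\}$. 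Uniqueness is immediate, because the lexicographic order is a total order on $\R^L$, so a minimum of a set of vectors is unique. This is the only delicate point to phrase: the minimizing action need not be unique, but the value vector is.

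For the dichotomy, take any $a_t\in\Hset_t$ with $V(a_t)\neq v^\star$, and let $\ell$ be the first index where they differ. Then $V_j(a_t)=v_j^\star$ for $j<\ell$, so $a_t\in S_{\ell-1}$ (using $\leqslant$ in the definition). Hence $V_\ell(a_t)\geqslant v_\ell^\star$ by the definition of $v_\ell^\star$, and since the values differ at $\ell$, the inequality is strict. I would rewrite this equivalently as the statement that $v^\star$ is lexicographically no larger than every attainable vector.

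For Stage~II, let $(a_t^\star,\bm{\xi}^\star)$ be optimal for $\mathbf{P2}$. Feasibility of the set is needed first: any $a\in S_L$ with its minimal slacks satisfies (\ref{P2}a)--(\ref{P2}b) for every $\epsilon_\ell\geqslant0$. This set is nonempty and, by the stated continuity and compactness, the optimum is attained. Monotonicity then gives $V_\ell(a_t^\star)\leqslant V_\ell(\bm{\xi}^\star)\leqslant v_\ell^\star+\epsilon_\ell$, because $\xi^\star_{i,m}\geqslant[g_{i,m}(a_t^\star,s_t)]_+$. If all $\epsilon_\ell=0$, then $V_\ell(a_t^\star)\leqslant v_\ell^\star$ for all $\ell$. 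If equality failed, take the first $\ell$ with strict inequality; then $V(a_t^\star)$ would be lexicographically smaller than $v^\star$, contradicting the minimality established in the dichotomy. Hence equality holds for every $\ell$.

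The main obstacle I expect is the slack-elimination equivalence. It is needed to justify that the $v_\ell^\star$ computed with free $\bm{\xi}$ coincide with those over actions, and the monotonicity of $V_\ell$ on the nonnegative orthant is what settles it.
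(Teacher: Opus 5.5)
Your proposal is correct and follows essentially the same route as the paper's proof. Both arguments reduce $\mathbf{P1}$ to the action level via the minimal slacks $\xi_{i,m}=[g_{i,m}(a_t,s_t)]_+$ and build nested minimizer sets; the paper's $\Hset^{(\ell)}$ uses equality, which is equivalent to your $\leqslant$ version. Both then obtain lexicographic minimality and the dichotomy by a first-differing-class contradiction, and derive the Stage-II bound from $V_\ell(a_t^\star)\leqslant V_\ell(\bm{\xi}^\star)\leqslant v_\ell^\star+\epsilon_\ell$, reusing the same contradiction when all $\epsilon_\ell=0$. Your monotonicity argument for eliminating the slacks simply spells out a step the paper asserts in one line.
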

\begin{IEEEproof}
Please see Appendix~A.
\end{IEEEproof}

Theorem~\ref{th:priority} points out that reducing any $V_\ell$ below $v_\ell^{\star}$ necessarily increases the shortfall of a higher-priority class above its optimum, and Stage~II selects the final action within the class bounds of (\ref{P2}b).
Furthermore, if some actions jointly satisfy all hard targets, clearly we have $v_\ell^\star=0$.

\subsection{Conflict Certificates and Delay-Safe Execution}
\label{sec:gov-cert}
As shown in Fig.~\ref{fig:xTRUCE}, at each epoch, xTRUCE ultimately returns a machine-readable conflict certificate for xApp agent renegotiation and selects a verified gNB control action, thereby ensuring that no delayed or unverified arbitration results will be executed.
The two outcomes are elaborated below.

\subsubsection{Conflict Certificate with Prices}
At each epoch, the certificate flows along the slow loop to the operator and the relevant xApps (each reading only the entries for its own targets), capturing two key points: the extent to which each hard target was relaxed and which limits of $\Hset_t$ result in such relaxation.
The former is the realized shortfall at $a_t^{\star}$ of $\mathbf{P2}$, denoted as $\bm{\xi}_t^{\star}=[g_{i,m}(a_t^{\star},s_t)]_{+}$.
For the latter, we associate each constraint of $\mathbf{P2}$ with a Lagrange multiplier.
When $\mathbf{P2}$ is convex and satisfies a constraint qualification such as Slater's condition~\cite{boyd2004convex}, an optimal multiplier measures the local change in the optimum caused by relaxing its constraint, and here we refer to it as a \emph{price}.
Such conditions, combined with complementary slackness, further guarantee that a positive price identifies a binding constraint.
Without these, the returned prices are only heuristic local diagnostics without a global guarantee.
By carefully examining $\mathbf{P2}$, let $\mu_{i,m}\geqslant 0$ denote the price of constraint (\ref{P2}a) and $\mu_j\geqslant 0$, $j\in\Jphys\cup\Joper$, that of the $j$-th limit defining $\Hset_t$, which together form the price set $\bm{\mu}_t$.
With these, the conflict certificate is given as
\begin{equation}
	\Zcert_t=\big(\bm{\xi}_t^{\star},\ \bm{\mu}_t,\ \mathcal{F}_t,\ t\big),
\label{eq:cert}
\end{equation}
where $\mathcal{F}_t$ is produced only at every $T^{\mathrm{cfg}}$-th epoch, collecting $(v_1^{\star},\cdots,v_L^{\star})$ at each operator-prescribed candidate $a_t^{\mathrm{cfg}}$, thereby explaining the choice of the best one.
The epoch stamp $t$ prevents an expired certificate from affecting later decisions.
In short, $\Zcert_t$ provides the information, in machine-readable form, regarding three issues of concern to an xApp or the operator about what was missed, why, and what would help.

The certificate shapes the slow renegotiation loop of Section~\ref{sec:arch-loops} in two modes.
In the autonomous mode, it is returned to the xApp agents to revise their unmet targets, while in the human-in-the-loop mode, it is reported to the operator to adjust priorities or, exceptionally, the operational limits, which is the only way to change $\Coper$.
Notably, whether and how fast the renegotiation converges depends on the agentic capabilities.

\subsubsection{Delay-Safe Action Execution}
Solving $\mathbf{P1}$ and $\mathbf{P2}$ may exceed the fixed per-epoch time threshold, denoted as $\tau$.
To handle this, the arbiter maintains a standby action $a_t^{\mathrm{stb}}$, first initialized with $a_t^{\mathrm{base}}$ of Assumption~\ref{as:baseline} and then updated at each $T^{\mathrm{cfg}}$-th epoch by the verified Stage-I action associated with the best configuration candidate available by $\tau$.
Incorporating all potential cases, the E2 action ultimately output from xTRUCE for execution, denoted as $a_t^{\mathrm{exec}}$, is determined by
\begin{equation}
a_t^{\mathrm{exec}}=
\begin{cases}
a_t^{\star}, & \text{if a Stage-II action is available by $\tau$},\\
a_t^{\mathrm{stb}}, & \text{else, if updated at $T^{\mathrm{cfg}}$-th epoch},\\
a_{t-1}^{\mathrm{exec}}, & \text{else, if } a_{t-1}^{\mathrm{exec}}\in\Hset_t \text{ still holds},\\
a_t^{\mathrm{base}}, & \text{otherwise}.
\end{cases}
\label{eq:anytime}
\end{equation}

Note that every action selected by \eqref{eq:anytime} always belongs to $\Hset_t$, since the first two are verified, $a_{t-1}^{\mathrm{exec}}$ is rechecked, and $a_t^{\mathrm{base}}$ satisfies Assumption~\ref{as:baseline}.
Most importantly, the physical and operational limits are completely decoupled from the quality of the agents, and only the achievability of their intents degrades by priority as proposals worsen or conflicts arise.
If no verified $a_t^{\star}$ is available by $\tau$, $\bm{\xi}_t^{\star}$ and $\bm{\mu}_t$ are set to $\varnothing$ to indicate that no new Stage-II optimality is available, while $\mathcal F_t$ retains any configuration evaluations completed by then.

\begin{algorithm}[t]
\caption{The Proposed xTRUCE for Agentic O-RAN}
\label{alg:arbiter}
\begin{algorithmic}[1]
\REQUIRE \textit{$\Aset$, $\Nset$, $\phi$, $\{d_m\}_{m\in\Mset}$, $\pi_{\mathrm{base}}$, $T$, $\tau$, $T^{\mathrm{cfg}}$, $\Omega^{\mathrm{cfg}}=\{a_{(n)}^{\mathrm{cfg}}\}_{n=1}^{C}$, and the operator policies ($L$, $\{\epsilon_\ell\}$, $\ell_{i,m}$, $\beta_{i,m}$)}
\ENSURE \textit{$\{a_t^{\mathrm{exec}},\Zcert_t\}_{t=1}^{T}$}
\STATE \textit{Set a certified initial action $a_0^{\mathrm{exec}}$}
\FOR{$t\leftarrow1$ \textit{to} $T$ \textit{(i.e., the fast control loop)}}
  \STATE \textit{Obtain $s_t$ and valid proposals, form $\Rset_t$, calculate $g_{i,m}$ by \eqref{eq:req}, and build $\Hset_t$ using $a_{t-1}^{\mathrm{exec}}$ by \eqref{eq:cphys}--\eqref{eq:safeset}}
  \STATE \textit{Start a timer, set $a_t^{\mathrm{stb}}\leftarrow\pi_{\mathrm{base}}(s_t,a_{t-1}^{\mathrm{exec}})$ and $\mathcal F_t\leftarrow\varnothing$}
  \IF{$t\bmod T^{\mathrm{cfg}}=0$}
    \FOR{$n\leftarrow1$ \textit{to} $C$}
      \STATE \textit{Solve $\mathbf{P1}$ under $a_{(n)}^{\mathrm{cfg}}$ and, if completed by $\tau$, add $\big(a_{(n)}^{\mathrm{cfg}},(v_{1,(n)}^\star,\ldots,v_{L,(n)}^\star)\big)$ to $\mathcal F_t$}
    \ENDFOR
    \STATE \textit{If $\mathcal F_t\neq\varnothing$, set $a_t^{\mathrm{stb}}$ to the verified Stage-I action associated with its lexicographically best entry}
    \STATE \textit{If $|\mathcal F_t|=C$, set $\big(a_t^{\mathrm{cfg}},(v_1^\star,\ldots,v_L^\star)\big)$ to the lexicographically best entry of $\mathcal F_t$}
  \ELSE
    \STATE \textit{Retain the previously executed configuration action and solve $\mathbf{P1}$ for $(v_1^\star,\ldots,v_L^\star)$}
  \ENDIF
  \STATE \textit{If Stage I finishes by $\tau$, attempt to solve $\mathbf{P2}$ once and accept $(a_t^\star,\bm\mu_t)$ only if returned and verified by $\tau$}
  \STATE \textit{Evaluate $\bm\xi_t^\star$ at an accepted $a_t^\star$, or set $(\bm\xi_t^\star,\bm\mu_t)\leftarrow(\varnothing,\varnothing)$ otherwise}
  \STATE \textit{Select $a_t^{\mathrm{exec}}$ by \eqref{eq:anytime}, assemble $\Zcert_t$ by \eqref{eq:cert}, and return the relevant entries to each xApp and the full certificate to the operator (i.e., the slow renegotiation loop)}
\ENDFOR
\end{algorithmic}
\end{algorithm}
\subsection{Algorithm and Complexity Analysis}
To better demonstrate the full picture of xTRUCE for Agentic O-RAN, we summarize its relevant technical points and enclose them in Algorithm~\ref{alg:arbiter}, as shown on the next page.

In terms of the computational complexity of Algorithm~\ref{alg:arbiter}, line~3 processes at most $|\Rset_t|\leqslant NM$ targets per epoch in $\mathcal{O}(NM)$ time and assembles $\Hset_t$ in time linear in the number of limits in \eqref{eq:cphys} and \eqref{eq:coper}.
The dominant cost lies in the optimization calls, i.e., the baseline solve in line~4, the $L$ class solves behind each solve of $\mathbf{P1}$ in line~7 or line~12, and the single solve of $\mathbf{P2}$ in line~14, hence requiring $(L+2)$ solves per regular epoch and $(CL+2)$ per configuration epoch, where $C$ is the size of the configuration candidate set $\Omega^{\mathrm{cfg}}$.
Once instantiated into convex programs, every such call is solvable to any fixed accuracy in polynomial time~\cite{boyd2004convex}.
Meanwhile, verifying an action against $\Hset_t$ in lines~9, 14, and~16 takes one linear pass over the limits and targets, keeping the delay-safe rule \eqref{eq:anytime} affordable at any interruption instant.
As such, denoting $Q_{\mathrm{sol}}$ as the cost of one solve, running Algorithm~\ref{alg:arbiter} over all the $T$ epochs has a polynomial-time overall complexity of $\mathcal{O}\big(T\,NM+(L+2)\,T\,Q_{\mathrm{sol}}+(C-1)\,L\,\lfloor T/T^{\mathrm{cfg}}\rfloor\,Q_{\mathrm{sol}}\big)$.

\section{Use Case: Multi-Cell Agentic O-RAN Control}\label{sec:usecase}
This section instantiates our xTRUCE for multi-cell agentic O-RAN control by specifying its state, dual-timescale actions, KPI map, rigid constraints, and four xApp roles.

\subsection{Network and E2 Control Actions}\label{sec:uc-net}
Consider $B$ cells indexed by $\mathcal B=\{1,\cdots,B\}$ serving $U$ users indexed by $\mathcal U=\{1,\cdots,U\}$ over $K$ RBs indexed by $\mathcal K=\{1,\cdots,K\}$, each with bandwidth $W$ Hz.
Then, we instantiate the configuration and fast components of $a_t$ as
\begin{gather}
a_t^{\mathrm{cfg}}
=\big(\{\alpha_b(t)\}_{b\in\mathcal B},
\{z_u(t)\}_{u\in\mathcal U}\big),
\label{eq:actinst}\\
a_t^{\mathrm{rt}}
=\big(\{x_{u,k}(t)\}_{u\in\mathcal U,k\in\mathcal K},
\{p_{u,k}(t)\}_{u\in\mathcal U,k\in\mathcal K}\big).
\label{eq:actinstrt}
\end{gather}
Here, $\alpha_b(t)\in\{0,1\}$ is the cell activation/sleep state ($1$ means activation), while $z_u(t)\in\mathcal B$ identifies the active cell serving user $u$, with $\alpha_{z_u(t)}(t)=1$ and $\mathcal U_b(t)=\{u\mid z_u(t)=b\}$.
At the fast timescale, $x_{u,k}(t)\in[0,1]$ is the fraction\footnote{Here, we assume that multiple users can reuse one RB at one time, and multiple RB shares used by one user do not need to be adjacent in frequency.} of epoch $t$ for which RB $k$ is allocated to user $u$, while $p_{u,k}(t)\geqslant0$ is its average transmit power.
Notably, the components of \eqref{eq:actinst} and \eqref{eq:actinstrt} are RAN controls managed at the Near-RT RIC side, where the cell activation, cell-to-user steering, and RB shares are already exposed by existing E2SMs~\cite{polese2024empowering,lacava2024steering}.
Naturally, the cell-level transmit power on RB $k$ is  $P_{b,k}(t)=\sum_{u\in\mathcal{U}_b(t)}p_{u,k}(t)$.

\subsection{RAN State and KPI Models}\label{sec:uc-kpi}
In this use case, the RAN state $s_t$ contains channel gains, inter-cell interference, queue backlogs, and traffic arrivals at epoch $t$.
Let $G_{b,u,k}(t)\geqslant0$ denote the channel power gain from cell $b$ to user $u$ on RB $k$, and write $G_{u,k}(t)=G_{z_u(t),u,k}(t)$ for the serving-cell gain.
Then, let $I_{u,k}(t)\geqslant 0$ be the fixed interference power received by user $u$ on RB $k$ (measured at epoch $t-1$ and treated as known at epoch $t$, which is updated between epochs).
Further, $Q_u(t)\geqslant0$ and $\lambda_u(t)\geqslant0$ denote the downlink queue backlog and per-epoch traffic-arrival amount for user $u$, respectively, both in bits.
In the meantime, the KPI map $\phi$ of~\eqref{eq:kpimap} contains five KPI families, including user rate with $d_m=1$ plus queue backlog, per-cell energy consumption, caused interference, and load use with $d_m=-1$.

First, the bit rate at user $u$ in \textit{bits/s} is given by
\begin{equation}
R_u(a_t,s_t)=\sum_{k\in\mathcal{K}}x_{u,k}(t)W
\log_2\!\bigg(1+\frac{G_{u,k}(t)p_{u,k}(t)}
{x_{u,k}(t)(\sigma^2+I_{u,k}(t))}\bigg),
\label{eq:rate}
\end{equation}
where $\sigma^2$ is the noise power over one RB.
Note that $R_u(a_t,s_t)$ is defined as zero when $x_{u,k}(t)=0$, while $p_{u,k}(t)/x_{u,k}(t)$ is the transmit power used when user $u$ is scheduled.

The queue update over an epoch of length $\tau$, in \textit{bits}, is
\begin{equation}
Q_u(t+1)=\big[Q_u(t)-\tau R_u(a_t,s_t)\big]_{+}+\lambda_u(t).
\label{eq:queue}
\end{equation}
This reflects the queuing delay, since a bounded queue at a given arrival rate implies a bounded delay by Little's law~\cite{little2008little}.

The remaining per-cell KPIs of power consumption and interference in \textit{W}, and dimensionless resource load, are
\begin{gather}
E_b(a_t)=P_b^{\mathrm{cir}}\alpha_b(t)
+\sum_{k\in\mathcal K}P_{b,k}(t),
\label{eq:energy}\\
I_b^{\mathrm{out}}(a_t,s_t)
=\sum_{k\in\mathcal K}\sum_{u\notin\mathcal U_b(t)}
G_{b,u,k}(t)P_{b,k}(t),
\label{eq:leak}\\
\rho_b(a_t)=\frac{1}{K}
\sum_{u\in\mathcal U_b(t)}\sum_{k\in\mathcal K}x_{u,k}(t),
\label{eq:load}
\end{gather}
respectively. Among them, $P_b^{\mathrm{cir}}$ is a fixed circuit power of an active cell $b$, and $\rho_b(a_t)$ is its average RB utilization.
Unlike the measured $I_{u,k}(t)$, $I_b^{\mathrm{out}}(a_t,s_t)$ depends on the selected transmit powers and can therefore be targeted by an xApp.

\subsection{Physical and Operator Limits}\label{sec:uc-rigid}
In the context of~\eqref{eq:actinst} and \eqref{eq:actinstrt}, the physical layer of Section~\ref{sec:gov-layers} is instantiated by the following four limits:
\begin{equation}
\left\{
\begin{aligned}
&c_1:\sum_{k\in\mathcal{K}}P_{b,k}(t)-\alpha_b(t)P_b^{\max}\leqslant 0,\\
&c_2: \sum_{u\in\mathcal{U}_b(t)}x_{u,k}(t)-1\leqslant 0,\\
&c_3: -x_{u,k}(t)\leqslant 0, \ x_{u,k}(t)-1\leqslant 0,\\
&c_4: -p_{u,k}(t)\leqslant0, \ p_{u,k}(t)-x_{u,k}(t)P^{\mathrm{rb}}\leqslant 0.
\end{aligned}
\right.
\label{eq:phys}
\end{equation}
Here, $P_b^{\max}$ and $P^{\mathrm{rb}}$ are the per-cell and per-RB power limits, respectively.
Moreover, constraints $c_1$--$c_4$ bound the total cell power, total share of each RB, individual RB shares, and the transmit power assigned with each share, respectively.

Then, for the operational layer, let $N^{\mathrm{act}}(t)$ and $N^{\mathrm{str}}(t)$ count the cells changing activation state and the users changing serving cell between epochs $t-1$ and $t$, respectively.
Likewise, for all $(b,u,k)$ at epoch $t$, the operator rules are instantiated by the following five limits:
\begin{equation}
\left\{
\begin{aligned}
&e_1: R_u^{\min}-R_u(a_t,s_t)\leqslant 0,\ \forall u\in\mathcal{U}^{\mathrm{prot}},\\
&e_2: |p_{u,k}(t)-p_{u,k}(t-1)|-\Delta^{p}\leqslant 0,\\
&e_3: |x_{u,k}(t)-x_{u,k}(t-1)|-\Delta^{x}\leqslant 0,\\
&e_4: N^{\mathrm{act}}(t)-\Delta^{\mathrm{act}}\leqslant 0,\\
&e_5: N^{\mathrm{str}}(t)-\Delta^{\mathrm{str}}\leqslant 0,
\end{aligned}
\right.
\label{eq:ramp}
\end{equation}
where the last two only apply at configuration epochs.
Constraint $e_1$ enforces the committed rate floor $R_u^{\min}$ for each user in the operator-designated protected user group $\mathcal U^{\mathrm{prot}}$, while $e_2$--$e_5$ limit consecutive-epoch changes in power by $\Delta^{p}$, RB shares by $\Delta^{x}$, cell activation state by $\Delta^{\mathrm{act}}$, and cell-to-user steering by $\Delta^{\mathrm{str}}$, respectively.
Together, \eqref{eq:phys} and \eqref{eq:ramp} determine $\Cphys(s_t)$ and $\Coper(s_t,a_{t-1}^{\mathrm{exec}})$, and hence $\Hset_t$ in \eqref{eq:safeset}.

For Stage II, $D(a_t,a_{t-1}^{\mathrm{exec}})$ in $\mathbf{P2}$ is computed by the squared Euclidean distance between two consecutive fast actions, given by $D(a_t,a_{t-1}^{\mathrm{exec}})=\sum_{u,k}\big(x_{u,k}(t)-x_{u,k}(t\!-\!1)\big)^2+\sum_{u,k}\big((p_{u,k}(t)-p_{u,k}(t\!-\!1))/P^{\mathrm{rb}}\big)^2$, where the power differences are normalized by $P^{\mathrm{rb}}$.
The baseline action $a_t^{\mathrm{base}}$ retains the configuration of $a_{t-1}^{\mathrm{exec}}$ and obtains $(x_{u,k}(t),p_{u,k}(t))$ from one convex solve that minimizes $\sum_{u,k}p_{u,k}(t)$ subject to $c_1$--$c_4$ and $e_1$--$e_3$.
Since only one linear objective is involved without any slacks, it is cheap and fast to solve at each epoch.

\subsection{Roles of Four xApp Agents}\label{sec:uc-agents}
Each agent acts as an xApp running in the Near-RT RIC, empowered by either an external LLM in real-time or a scripted rule, and in this use case, a total of $4$ agents (i.e., $N=4$) are instantiated targeting different tasks.
The first agent, namely \emph{QoS assurance agent}, targets improving the rate KPI $R_u(a_t,s_t)$ in \eqref{eq:rate} only for the protected users in $\mathcal{U}^{\mathrm{prot}}$, which is set as a hard target with a priority class of $1$ (i.e., $\ell=1$).
As for each non-protected user, the same QoS agent has a soft rate target by controlling $R_u(a_t,s_t)$ to reduce the backlog $Q_u(t)$ of \eqref{eq:queue}.
Apart from this, an \emph{energy saving} agent focuses on a soft target of reducing power consumption $E_b(a_t)$ in \eqref{eq:energy} for each cell, while an \emph{interference coordination} agent exists aiming at a soft target of caused interference $I_b^{\mathrm{out}}(a_t,s_t)$ in \eqref{eq:leak}.
Finally, we assume that there is a \emph{load balancing} agent posting per-cell caps on a soft target of load $\rho_b(a_t)$ in \eqref{eq:load}.
Note that none of these four agents directly commands any RAN-control action variables $\big(a_t^{\mathrm{cfg}},\,a_t^{\mathrm{rt}}\big)$ in \eqref{eq:actinst} and \eqref{eq:actinstrt}, and they only affect the action results via the generated proposals.

\subsection{xTRUCE Arbitration and Exact Prices}\label{sec:uc-results}
By substituting \eqref{eq:rate}--\eqref{eq:load} into $g_{i,m}(a_t,s_t)$ of \eqref{eq:req}, $\mathbf{P1}$ and $\mathbf{P2}$ reduce to programs over the fast variables $(x_{u,k}(t),p_{u,k}(t))$ and the slacks $\bm{\xi}$, with $a_t^{\mathrm{cfg}}$ fixed between configuration updates.
From Section~\ref{sec:uc-agents}, the hard targets are only the protected-rate instances at $\ell=1$, thereby $\mathbf{P1}$ reduces to solving $V_1(\bm{\xi})$ only.
Building further upon \eqref{eq:classviol}, $\mathbf{P1}$ becomes
\begin{align}
v_1^{\star}=\min_{a_t^{\mathrm{rt}},\ \bm{\xi}\geqslant 0}\quad
&\sum_{u\in\mathcal{U}^{\mathrm{prot}}}\beta_{u}\,\xi_{u}^{2}\label{eq:p1inst}\\
{\rm s.t.}\quad
&\theta_{u}-R_u(a_t,s_t)\leqslant\xi_{u},\ \forall u\in\mathcal{U}^{\mathrm{prot}},\tag{\ref{eq:p1inst}a}\\
&a_t=\big(a_t^{\mathrm{cfg}},\,a_t^{\mathrm{rt}}\big)\in\Hset_t,\tag{\ref{eq:p1inst}b}
\end{align}
where $(\theta_u,\beta_u,\xi_u)$ abbreviates the requested rate target, weight, and slack of the corresponding hard pair $(i,m)\in\Rset_t$.

As for the instantiation of $\mathbf{P2}$, the soft rate, energy, interference, and load targets of the four agents enter \eqref{P2} as the terms $[g_{i,m}(a_t,s_t)]_{+}^{2}$ by the same substitution, thus giving:
\begin{align}
\min_{a_t^{\mathrm{rt}},\ \bm{\xi}\geqslant 0}\ \
&\sum_{u\in\mathcal{U}\setminus\mathcal{U}^{\mathrm{prot}}}\!\!\!\beta_{u}\big[\theta_{u}-R_u(a_t,s_t)\big]_{+}^{2}
+\eta\,D(a_t,a_{t-1}^{\mathrm{exec}})\notag\\
&+\sum_{b\in\mathcal{B}}\Big(\!\beta_{b}^{\mathrm{E}}\big[E_b(a_t)-\theta_{b}^{\mathrm{E}}\big]_{+}^{2}
\!+\!\beta_{b}^{\mathrm{I}}\big[I_b^{\mathrm{out}}(a_t,s_t)-\theta_{b}^{\mathrm{I}}\big]_{+}^{2}\notag\\
&\qquad\ \ +\beta_{b}^{\rho}\big[\rho_b(a_t)-\theta_{b}^{\rho}\big]_{+}^{2}\Big)\label{eq:p2inst}\\
{\rm s.t.}\ \
&\text{(\ref{eq:p1inst}a), (\ref{eq:p1inst}b)},\ \
\sum_{u\in\mathcal{U}^{\mathrm{prot}}}\!\!\beta_{u}\,\xi_{u}^{2}\leqslant v_1^{\star}+\epsilon_1.\tag{\ref{eq:p2inst}a}
\end{align}
Herein, $\big(\theta_b^{\mathrm{E}},\theta_b^{\mathrm{I}},\theta_b^{\rho}\big)$ and $\big(\beta_b^{\mathrm{E}},\beta_b^{\mathrm{I}},\beta_b^{\rho}\big)$ are the per-cell soft caps and their weights posted by the energy saving, interference coordination, and load balancing agents on \eqref{eq:energy}, \eqref{eq:leak}, and \eqref{eq:load}, respectively.
Constraint (\ref{eq:p2inst}a) corresponds to constraint (\ref{P2}b) under this instance, using the single output $v_1^{\star}$ of \eqref{eq:p1inst}.

To address these two instantiated problems, we give the following theorem to establish the solution rationale:
\begin{theorem}[Guarantee on convexity and exact prices]\label{th:convex}
For any given $s_t$ and $a_t^{\mathrm{cfg}}$, both \eqref{eq:p1inst} and \eqref{eq:p2inst} are convex w.r.t. $\big(a_t^{\mathrm{rt}},\bm{\xi}\big)$.
If $\epsilon_1>0$ and there exists an action $\bar a_t=(a_t^{\mathrm{cfg}},\bar a_t^{\mathrm{rt}})\in\Hset_t$, such that $R_u(\bar a_t,s_t)>R_u^{\min}$,
$\forall u\in\mathcal U^{\mathrm{prot}}$, then both problems hold strong duality with the dual optima attained, and $\bm{\mu}_t$ in \eqref{eq:cert} are the exact Lagrange multipliers of \eqref{eq:p2inst}.
\end{theorem}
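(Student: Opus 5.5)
The proof splits into two parts: convexity, then a Slater point built from $\bar a_t$ that gives strong duality and attained multipliers for both programs.

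\emph{Convexity.} With $a_t^{\mathrm{cfg}}$ fixed, the sets $\mathcal U_b(t)$, $\alpha_b(t)$, and the serving gains $G_{u,k}(t)$ are all constants. Each summand of \eqref{eq:rate} has the form $x\,W\log_2\!\big(1+c\,p/x\big)$ with $c=G_{u,k}(t)/(\sigma^2+I_{u,k}(t))\geqslant0$. This is the perspective of the concave map $p\mapsto W\log_2(1+cp)$. It is therefore jointly concave in $(x_{u,k},p_{u,k})$ on $x>0$, and it extends by closure to $x=0$, where its value is $0$.

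It follows that $R_u$ is concave. Hence the hard constraint $\theta_u-R_u\leqslant\xi_u$ in (\ref{eq:p1inst}a) is convex in $(a_t^{\mathrm{rt}},\bm\xi)$, and so is $e_1$. All of the following are affine in the fast variables: $E_b$, $I_b^{\mathrm{out}}$, $\rho_b$, $P_{b,k}$, and the constraints $c_1$--$c_4$. The constraints $e_2$ and $e_3$ are absolute values of affine functions, hence convex. The constraints $e_4$ and $e_5$ involve only the fixed configuration, so each is either vacuous or renders the configuration candidate infeasible, and in neither case does it affect convexity in the fast variables.

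For the objectives, $\xi_u^2$ is convex. The maps $[f]_+^2$ are convex whenever $f$ is convex, since $t\mapsto[t]_+^2$ is convex and nondecreasing. This covers $\theta_u-R_u$ (convex) and $E_b-\theta^{\mathrm E}_b$, $I_b^{\mathrm{out}}-\theta^{\mathrm I}_b$, $\rho_b-\theta^\rho_b$ (affine). The penalty $D$ is a convex quadratic. Constraint (\ref{eq:p2inst}a) is a convex quadratic sublevel set. So both \eqref{eq:p1inst} and \eqref{eq:p2inst} are convex.

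\emph{Strong duality via Slater.} Because the affine constraints only require feasibility, I will use the refined Slater condition: it suffices to find a feasible point that satisfies every \emph{nonlinear} convex inequality strictly. The natural candidate is $\bar a_t^{\mathrm{rt}}$, which satisfies $e_1$ strictly by hypothesis. The slacks are chosen in two cases.

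For \eqref{eq:p1inst}, take $\bar\xi_u=[\theta_u-R_u(\bar a_t,s_t)]_++1$, which makes (\ref{eq:p1inst}a) strict and $\bar\xi>0$.

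For \eqref{eq:p2inst}, the additional constraint (\ref{eq:p2inst}a) must also hold strictly, and this is where $\epsilon_1>0$ enters. Let $(a^\circ,\xi^\circ)$ be an optimum of \eqref{eq:p1inst}; it exists by compactness of $\Hset_t$ and continuity. Take the convex combination $(a^\lambda,\xi^\lambda)=(1-\lambda)(a^\circ,\xi^\circ+\delta\mathbf 1)+\lambda(\bar a_t,\bar\xi)$ with small $\lambda,\delta>0$.

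Concavity of $R_u$ gives $R_u(a^\lambda)\geqslant(1-\lambda)R_u(a^\circ)+\lambda R_u(\bar a_t)$. Since $R_u(a^\circ)\geqslant R_u^{\min}$ and $R_u(\bar a_t)>R_u^{\min}$, this yields $R_u(a^\lambda)>R_u^{\min}$, so $e_1$ is strict. The hard constraints are strict at both endpoints: at the first endpoint because $\delta>0$ adds slack, and at the second by the choice of $\bar\xi$. Hence they are strict at $(a^\lambda,\xi^\lambda)$. The nonlinear constraints $e_2,e_3$ can be split into pairs of affine inequalities, so the refined Slater condition needs no strictness for them. By continuity, $\sum\beta_u(\xi_u^\lambda)^2\to v_1^\star$ as $\lambda,\delta\to0$, so it is strictly below $v_1^\star+\epsilon_1$ for small enough $\lambda,\delta$.

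Slater's theorem \cite{boyd2004convex} then gives zero duality gap and attainment of the dual optimum for both problems. Since the primal is also attained, KKT holds. The multipliers returned for (\ref{eq:p1inst}a) and for the limits defining $\Hset_t$ are therefore genuine optimal Lagrange multipliers of \eqref{eq:p2inst}, i.e.\ they are exactly $\bm\mu_t$. Complementary slackness further identifies each positive price with a binding constraint.

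\emph{Main obstacle.} The delicate step is the $x_{u,k}=0$ boundary, where the perspective is defined by closure. There I must check that $R_u$ is closed, concave and continuous on the compact $\Hset_t$, so that optima exist and the convex-combination argument stays valid. A secondary point is that, without $\epsilon_1>0$, (\ref{eq:p2inst}a) would only hold with equality at optimum and no Slater point would exist; this is why the hypothesis is needed.
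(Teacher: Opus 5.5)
Your proof is correct and follows the paper's argument essentially step for step: concavity of $R_u$ via the perspective gives convexity, and refined Slater then gives strong duality. For \eqref{eq:p1inst} you take $\bar\xi_u$ strictly above the minimal slack, and for \eqref{eq:p2inst} you convexly combine a Stage-I minimizer with $\bar a_t$ and add a small slack offset $\delta$, with $\epsilon_1>0$ being exactly what makes the last constraint in (\ref{eq:p2inst}a) strict. The only cosmetic difference is that you also combine the slack vectors, whereas the paper recomputes the minimal slack $[\theta_u-R_u(a_t(\rho),s_t)]_+ +\delta$ at the combined action; either choice works.
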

\begin{IEEEproof}
Please see Appendix~B.
\end{IEEEproof}

Theorem~\ref{th:convex} ensures that \eqref{eq:p1inst} and \eqref{eq:p2inst} can be directly handed to some off-the-shelf convex programming tools (such as CVXPY~\cite{diamond2016cvxpy}) to finalize $v_1^{\star}$ and $a_t^{\star}$ in polynomial time.
Especially, at every $T^{\mathrm{cfg}}$-th epoch, xTRUCE solves \eqref{eq:p1inst} under each candidate $a_t^{\mathrm{cfg}}$, records every completed evaluation in $\mathcal F_t$, and solves \eqref{eq:p2inst} once under the lexicographically best candidate if all candidates finish by $\tau$.
As such, xTRUCE then assembles $\Zcert_t$ as the feedback to xApps via \eqref{eq:cert} and outputs a safe action $a_t^{\mathrm{exec}}$ via \eqref{eq:anytime}.
It should be emphasized that Theorem~\ref{th:convex} only characterizes the properties of this specific use case rather than of general O-RAN scenarios.
Even in non-convex or delay-exceeded cases, the actions would still stay on the verified outcome of \eqref{eq:anytime}, thereby only the intent-layer optimality is lost, yet without compromising any gNB control safety.

\section{Experimental Evaluation}\label{sec:eval}
In this section, xTRUCE is evaluated exactly in line with the four design goals outlined in Section \ref{sec:arch-loops}: i) xApp proposal-independent safety, ii) priority-consistent target satisfaction, iii) certificate-guided LLM renegotiation, iv) delay-aware safe execution.
To this end, we conduct multi-cell Python-based simulations and OTA experiments on an OAI/FlexRIC-based O-RAN testbed, using the same xTRUCE implementation.

\subsection{Simulation and OTA Setup}
In the simulations, we instantiate the full control chain of Fig.~\ref{fig:arch}, from the policy output of the SMO/Non-RT RIC, through the xApps and xTRUCE at the Near-RT RIC, to a multi-cell RAN.
A configured operator policy represents the A1 policy of the intent-orchestration rApp, which by default enforces the protected-user rate floors as the rigid limit $e_1$ and ranks the protected-rate targets at $\ell=1$, the non-protected-user rate, energy, and interference targets at $\ell=2$, and the load caps at $\ell=3$.
The same four xApp agents of Section~\ref{sec:uc-agents} use deterministic rules in the safety, priority, and scalability sweeps, while the certificate-renegotiation simulations connect live ChatGPT-5.6 Terra to the QoS xApp and Claude-Sonnet-5 to the energy xApp through their respective vendor APIs.
All xApp realizations reuse one structured schema over ZeroMQ~\cite{hintjens2013zeromq}, which carries the RAN observations, JSON proposals, and their own certificate.
In addition, each accepted proposal stays valid for two epochs unless renewed, decoupling LLM generation from per-epoch arbitration.

Afterward, we implement Algorithm~\ref{alg:arbiter} in Python and formulate its two arbitration stages as parameterized exponential-cone programs in CVXPY, solved at each epoch through a warm-started fallback chain of CLARABEL, ECOS, and SCS.
All simulation trials run on a single $2.6$-GHz six-core Intel i7 CPU, and each setting uses $20$ screened seeds whose network realizations keep the protected-user rate floors physically reachable.
Below xTRUCE, a pair of replaceable state and actuation adapters connects the arbitration core to a custom epoch-level gNB simulator that implements the channel, queue, and power models of Section~\ref{sec:usecase}, and rebinding this pair to the live testbed reuses the whole upper chain in the OTA experiments.
For practical consistency, the simulations do not differentiate the fast actions of \eqref{eq:actinstrt} over individual RBs, so that xTRUCE arbitrates one overall spectrum share and one total transmit power per user, while each committed floor is planned against the measured interference plus the maximum one-epoch interference increase allowed by $e_2$.
All gNB-related parameters of the simulations and OTA experiments are summarized in Table~\ref{tab:params} unless otherwise stated.

\begin{table}[t]
\caption{Default Parameters in Simulation and OTA Evaluation}
\label{tab:params}
\centering
\footnotesize
\setlength{\tabcolsep}{2.8pt}
\renewcommand\arraystretch{1.3}
\begin{tabular}{|>{\raggedright\arraybackslash}p{0.34\columnwidth}|>{\raggedright\arraybackslash}p{0.27\columnwidth}|>{\raggedright\arraybackslash}p{0.29\columnwidth}|}
\hline
\textbf{Parameters} & \textbf{Simulation} & \textbf{OTA Testbed}\\
\hline
Number of cells / users ($B$ / $U$) & $4$ / $20$ (hexagonal, $500$-m ISD) & $1$ / $4$ (single indoor cell, 5G modules)\\
\hline
Spectrum ($K\times W$) & $12\times360$~kHz & $24\times360$~kHz\\
\hline
Protected users ($|\mathcal{U}^{\mathrm{prot}}|$) / Rate floor ($R_u^{\min}$) & $3$ protected users / $2$~Mbps & $1$ protected user / $0.5$~Mbps\\
\hline
Change limits ($\Delta^{x}$, $\Delta^{p}$, $\Delta^{\mathrm{act}}$, $\Delta^{\mathrm{str}}$) & $0.25$, $0.25$~W, $1$ cell, $3$ users & $0.25$, $0.25$~W, N/A, N/A\\
\hline
Channel model & UMa path loss~\cite{3gpp36814} + $8$-dB shadowing; AR(1) Rayleigh ($\rho=0.349$)~\cite{baddour2005ar} & Indoor n48 OTA link (SNR $\approx21$~dB); per-RB rate calibrated from measurement\\
\hline
Traffic per user ($\mathbb E[\lambda_u(t)]/\tau$) & $6$~Mbps (exponential arrivals) & $3.2$~Mbps (payload offered)\\
\hline
Epoch length ($\tau$) & $1$~s & $0.1$~s\\
\hline
Powers ($P_b^{\max}$/$P_b^{\mathrm{cir}}$/$P^{\mathrm{rb}}$) & \multicolumn{2}{c|}{$10$/$50$/$2$~W}\\
\hline
Noise power ($\sigma^2$) & \multicolumn{2}{c|}{$1.15\times10^{-14}$~W}\\
\hline
Optimality tolerance ($\epsilon_\ell$) & \multicolumn{2}{c|}{$10^{-4}(1+v_\ell^{\star})$}\\
\hline
Target weight ($\beta_{i,m}$) & \multicolumn{2}{c|}{$1$ for all $(i,m)\in\Rset_t$}\\
\hline
\end{tabular}
\end{table}
As for the OTA experiments, we replace the simulated RAN environment with a live OAI/FlexRIC adapter while keeping the xTRUCE arbitration core unchanged, where deterministic realizations of the same four xApp roles are used for repeatable comparisons.
As shown on the right side of Fig.~\ref{fig:arch}, xTRUCE runs as a standalone Dockerized arbitration service co-located with the FlexRIC Near-RT RIC~\cite{schmidt2021flexric} on an NVIDIA DGX Spark~\cite{nvidia2026dgxspark}.
On top of the Python SDK of FlexRIC, the live adapter collects the RAN-state input $s_t$ from the E2 MAC/RLC/PDCP telemetry and maps the spectrum-share component of $a_t^{\mathrm{exec}}$ into integer quotas over $24$ RBs, which are submitted to the OAI gNB~\cite{kaltenberger2019oai5gnr} through the FlexRIC Slice SM.
Herein, we measure the rate delivered by one RB once in advance, enabling xTRUCE to relate each rate target to an RB quota.
Note that the OTA experiments only test the RB allocation rather than the power allocation of xTRUCE, since the deployed FlexRIC Slice SM path carries no power command~\cite{schmidt2021flexric}.
The same host runs the OAI 5G Core (v2.1.10), while an NI USRP-2901 SDR~\cite{ni2017usrp2901} operates the OTA cell on band-n48 at $3.61$~GHz with a $30$-kHz subcarrier spacing.
Meanwhile, four commercial 5G user equipment modules simultaneously attach over the air, each carrying a continuous $3.2$-Mbps downlink payload offered from the core-network side.
Accordingly, the OTA experiments focus on evaluating the live E2 measurement-and-control loop, whether every RB-allocation action delivered to the gNB satisfies the applicable constraints under conflicting xApp proposals, and whether \eqref{eq:anytime} rejects late or unverified arbitration results.

For comparison purposes, both environments replace only xTRUCE with two benchmark schemes of gNB action execution under conflicting xApp proposals: 1) a \emph{Direct} scheme, which directly turns the most demanding target of each xApp into the corresponding gNB action without checking its joint feasibility; and 2) a \emph{Clipping} scheme, which further clips the resulting actions from the Direct scheme and proportionally rescales them to satisfy physical constraints $c_1$--$c_4$ in \eqref{eq:phys} at the same time.
Neither benchmark acts on the interference and load xApp targets,\footnote{This is because the QoS and energy xApps have a relatively direct target-to-action mapping relationship, whereas the targets of interference and load xApps require joint coordination for gNB action determination.} operates on the operator priority and limits, or returns conflict certificates, hence their gaps to xTRUCE precisely quantify what the two-stage arbitration contributes.

\subsection{Simulation Results}\label{sec:ev-sim}
\begin{figure}[t]
\centering
\includegraphics[width=\columnwidth]{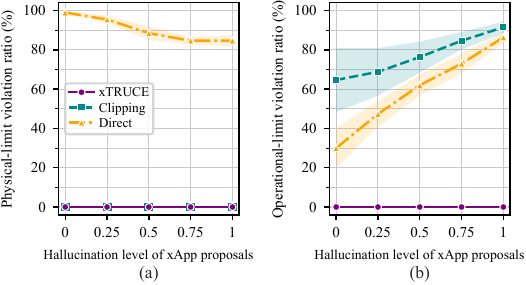}
\caption{Rigid-constraint violation ratios vs. varying proposal hallucination levels: (a) ratio of epochs violating $c_1$--$c_4$; (b) ratio of epochs violating $e_1$. The shaded bands span the $95\%$ confidence intervals over $20$ screened seeds.}
\label{fig:e1}
\end{figure}
\emph{(i) Proposal-independent gNB control safety:} Fig.~\ref{fig:e1} evaluates xTRUCE regarding the performance of rigid-constraint violation as the hallucination level of xApp proposals increases from $0$ (no hallucination) to $1$ (complete hallucination).
Here, we set the hallucination level following~\cite{huang2025resilience} to jointly control the probability of corrupting each xApp proposal and its deviation from the valid target.
Fig.~\ref{fig:e1}(a) first presents the violation ratios for physical limits $c_1$--$c_4$ over $600$ epochs at each hallucination level under the three schemes.
It is observed that the gNB control actions output by xTRUCE and Clipping never violate any physical limit, whereas Direct violates at least one physical limit in $99\%$ of the epochs even without injected hallucination and in $85$ to $96\%$ once hallucinated.
Under the same settings, Fig.~\ref{fig:e1}(b) then measures the violation ratios of the service-level operational limit $e_1$ (i.e., a committed rate floor of $2$ Mbps for each protected user in~\eqref{eq:ramp}).
It shows that xTRUCE still keeps zero violation, and Direct violates $e_1$ in $30$ to $86\%$ of the epochs.
Clipping performs even worse here, leaving at least one user below the rate floor in $65$ to $92\%$ of the epochs, which means its action stays physically legal as in Fig.~\ref{fig:e1}(a), but its committed services are lost.
Therefore, xTRUCE is the only scheme that preserves both physical feasibility and operator-committed rate floors across all tested proposal hallucination levels, thereby demonstrating its proposal-independent gNB control safety.

\begin{figure}[t]
\centering
\includegraphics[width=\columnwidth]{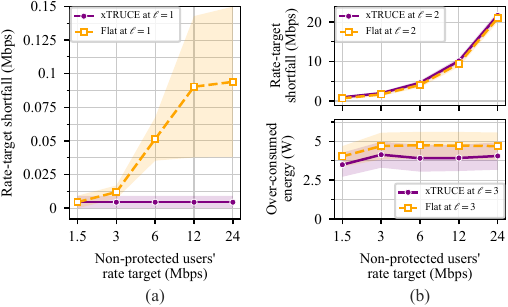}
\caption{Realized KPI-target shortfall comparisons under resource overload, induced by raising the non-protected-user rate target: (a) protected-user rate-target shortfalls ($\ell=1$); (b) non-protected-user rate-target shortfalls ($\ell=2$) (top), and over-consumed energy beyond caps ($\ell=3$) (bottom). Curves and shaded bands show the mean and $95\%$ confidence intervals, respectively.}
\label{fig:e2}
\end{figure}
\emph{(ii) Priority-consistent KPI satisfaction:} Next, Fig.~\ref{fig:e2} evaluates xTRUCE's capability of priority-consistent KPI target satisfaction as resource contention intensifies across three operator priority classes (i.e., $\ell=1,2,3$).
In this test, we declare three hard targets, i.e., offering a $3$-Mbps protected-user rate target at $\ell=1$, a non-protected-user rate target at $\ell=2$, and a $53$-W per-cell energy cap at $\ell=3$, whereas the resource contention sweep is controlled by raising the $\ell=2$ rate target from $1.5$ to $24$ Mbps.
Moreover, a dedicated benchmark scheme, namely \textit{Flat}, is devised that uses the same arbitration mechanism as xTRUCE but assigns all targets the same priority.
Fig.~\ref{fig:e2}(a) first depicts the shortfall performance of $\ell=1$ under the two schemes.
It is seen that the gNB control action derived from xTRUCE always keeps the highest-priority rate-target shortfall near zero regardless of the resource overload intensity, whereas the shortfall of Flat grows with the resource contention, up to about $0.1$~Mbps.
Fig.~\ref{fig:e2}(b) then tests the shortfall performance at $\ell =2$ (top) and $\ell=3$ (bottom) under the same overload levels.
With xTRUCE, the rate-target shortfall at non-protected users grows to $21.7$~Mbps and the over-consumed energy reaches $4.1$~W at the most intensive resource contention, which are precisely the priority concession results described in Theorem~\ref{th:priority}.
In contrast, the Flat curves show that disregarding KPI priorities buys little in return, for instance, the shortfall at $\ell =2$ shrinks by no more than $0.7$~Mbps and the over-consumed energy at $\ell =3$ even exceeds our xTRUCE by up to $0.8$~W.
The above results prove that, however far the resource demand scales, xTRUCE ensures the protection of higher-priority targets before considering lower-priority ones, thereby showcasing its ability to meet KPIs in accordance with operator-assigned priorities.

\begin{figure}[t]
\centering
\includegraphics[width=\columnwidth]{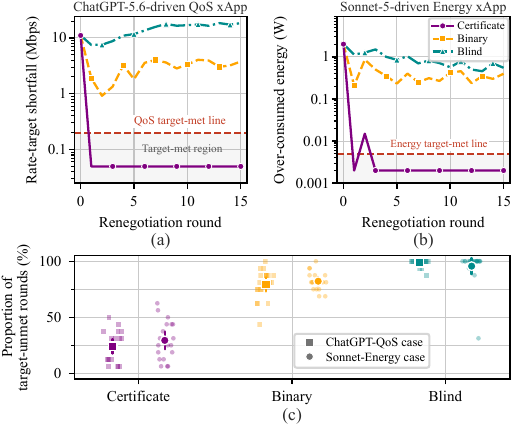}
\caption{Certificate-guided renegotiation performance of two live LLMs under three conflict feedback modes: (a) rate-target shortfall w.r.t. ChatGPT-5.6-driven QoS xApp; (b) over-consumed energy w.r.t. Sonnet-5-driven energy xApp; (c) proportion of target-unmet rounds with per-run points and means.}
\label{fig:e3}
\end{figure}
\emph{(iii) Certificate-guided renegotiation with live LLM xApps:} Further, Fig.~\ref{fig:e3} validates the conflict renegotiation effectiveness guided by xTRUCE's certificates, where we study two live LLM-xApp cases, i.e., a ChatGPT-5.6 Terra-driven QoS xApp and a Claude-Sonnet-5-driven energy xApp, each aiming to revise an initially infeasible KPI intent.
For comparison, two feedback benchmarks are employed: a \emph{Binary} mode returning only a one-bit indication of whether the KPI target is met and a \emph{Blind} mode returning no arbitration feedback to xApps.
Together with xTRUCE's \emph{Certificate} mode, all three feedback modes observe the same E2-telemetry fields and share the same arbitration process, testing over a total of $16$ renegotiation rounds (taking the median of $20$ runs for each round).
Here, a round is regarded as target-met if its performance gap does not exceed $0.2$~Mbps in the ChatGPT-QoS case or $0.005$~W in the Sonnet-Energy case, i.e., the red lines in the figures.
With an initial $20$-Mbps rate target while keeping a $5.5$-Mbps rate floor, Fig.~\ref{fig:e3}(a) first depicts the renegotiation effectiveness of the three modes in the ChatGPT-QoS case.
It is observed that Certificate guides the ChatGPT-5.6 agent to adjust its proposal and successfully meet the rate target using only one round, clearly outperforming Binary and Blind.
Then, with an initial $50$-W energy cap, Fig.~\ref{fig:e3}(b) examines the Sonnet-Energy case.
Similarly, Certificate reaches the energy-consumption line within three rounds, while Binary and Blind consistently fail to effectively guide LLM agents.
Note that the round-$2$ rebound is because Sonnet-5 adjusts the energy cap to exactly the energy consumption at round-$1$, so the naturally fluctuating power consumption briefly exceeds the line.
Moreover, Fig.~\ref{fig:e3}(c) counts the proportion of target-unmet rounds, where Certificate yields the minimum in both cases with a mean of $24\%$ against $79$ and $99\%$ (ChatGPT-QoS) and $29\%$ against $82$ and $96\%$ (Sonnet-Energy), since only Certificate reveals to LLMs where the feasibility boundary lies and how it should be relaxed across its requests.
The above demonstrates that xTRUCE can replace these vendor-dependent trial-and-error methods with explicit per-target certificates, enabling rapid, low-overhead recovery from infeasible xApp proposals and ensuring consistency across differing LLM intents.

\begin{figure}[t]
\centering
\includegraphics[width=\columnwidth]{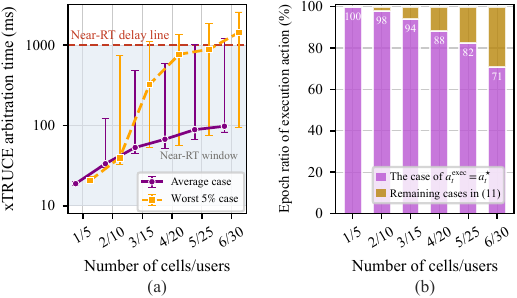}
\caption{xTRUCE arbitration timing and action selection under varying physical RAN capacities: (a) medians and min--max ranges over $20$ seeds for two timing measures; (b) ratio of epochs executing different execution action cases.}
\label{fig:e4}
\end{figure}

\emph{(iv) Delay-safe action execution:} In Fig.~\ref{fig:e4}, we examine the arbitration time of xTRUCE and the selected case of \eqref{eq:anytime} as the RAN capacity grows from one to six cells with five users each.
For accurate measurement, we exclude network realizations in which the protected-user rate floors are physically unreachable, and the retained runs cover $3000$ epochs in total.
Specifically, Fig.~\ref{fig:e4}(a) first presents the per-epoch xTRUCE arbitration time at each network size, where the average case is derived from taking the median of all per-epoch arbitration times, while the worst $5\%$ case means the top $5\%$ of epochs taking the longest arbitration times.
It is seen that the average-case time stays between $19$ and $98$ ms, far below $1$s at each median point, and the worst $5\%$ case begins to cross the Near-RT line at $30$ users, which is just because the arbitration complexity increases with the network scale.
Fig.~\ref{fig:e4}(b) then counts the ratio of epochs in the case of $a_t^{\mathrm{exec}}=a_t^{\star}$ (i.e., a verified Stage-II action is available within $\tau$) against that in the remaining cases of \eqref{eq:anytime} (i.e., Stage I\&II are not completed in time).
The results show that the epoch ratio executing $a_t^{\star}$ is first $100\%$ at $5$ users, and then gradually and slowly declines from $98\%$ at $10$ users to $71\%$ at $30$ users.
Such a downtrend is apparent since the arbitration time grows with the RAN capacity as observed in Fig.~\ref{fig:e4}(a), making more epochs hit the delay limit and thus fall back to the remaining cases of \eqref{eq:anytime}.
Execution records further certify that all $3000$ epochs use either $a_t^{\star}$ or one of the remaining verified cases in \eqref{eq:anytime}, which means unavailable or delay-exceeded results are never used as execution actions across different network sizes.

\begin{figure*}[!t]
\centering
\includegraphics[width=0.98\textwidth]{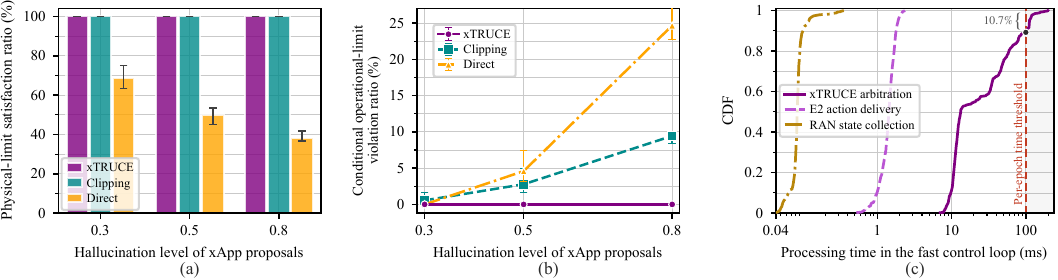}
\caption{OTA evaluation of xTRUCE under four hallucinated xApps: (a) ratios of epochs meeting $c_1$--$c_4$; (b) ratios of epochs violating $e_1$ among those actions meeting $c_1$--$c_4$; (c) empirical CDFs of fast-control-loop processing times across $540$ control epochs. Error bars in (a) and (b) span three repeated OTA tests.}
\label{fig:ota}
\end{figure*}

\subsection{OTA Testbed Results}\label{sec:ev-ota}
Beyond the above simulations, we finally validate xTRUCE over the air on the O-RAN-compliant OAI/FlexRIC testbed, as illustrated in Fig.~\ref{fig:ota}, where the same four xApps inject hallucinated proposals into a live E2 control loop.

\emph{(v) Proposal-independent gNB control safety over the air:} We first re-examine the real-world gNB control safety under xTRUCE at three fixed proposal hallucination levels of $0.3$, $0.5$, and $0.8$, and compare the results with the Clipping and Direct benchmarks.
Similar to Fig.~\ref{fig:e1}, the ratio of epochs in which the control actions satisfy the physical limits $c_1$--$c_4$ and the ratio of epochs in which these physically feasible actions still violate the operational limit $e_1$ are shown in Fig.~\ref{fig:ota}(a) and Fig.~\ref{fig:ota}(b), respectively.
It is seen that xTRUCE remains at a $100\%$ satisfaction ratio for both checks at each hallucination level, whereas Direct satisfies the physical limits in only $69$ down to $38\%$ of the epochs.
Particularly, Fig.~\ref{fig:ota}(b) demonstrates that even physically feasible actions can still violate $e_1$ as the hallucination grows, for instance, up to $9.4\%$ for Clipping and $24.7\%$ for Direct at the highest level, whereas xTRUCE stays at zero.
The above OTA results reconfirm that physical clipping alone cannot preserve the operator-defined service limit, and only xTRUCE preserves gNB control safety.

\emph{(vi) Delay-safe execution in the live E2 loop:} Fig.~\ref{fig:ota}(c) then tests the cumulative distribution function (CDF) of the per-epoch processing time of the whole live E2 loop.
Concretely, we break the aforementioned fast E2 control loop into three consecutive stages, i.e., RAN state collection over the E2 interface, xTRUCE arbitration, and gNB control action delivery over E2.
It is observed that RAN state collection and E2 action delivery complete within $0.4$ and $2.4$~ms, respectively, in all $540$ epochs, while arbitration dominates with a median of $13.3$~ms.
Moreover, the arbitration CDF meets the per-epoch time threshold, i.e., the $0.1$-s epoch length $\tau$, exactly at $89.3\%$ (the marked dot), leaving $10.7\%$ of xTRUCE calls beyond the threshold.
Our execution logs also show that all $58$ late calls and all $33$ in-time calls without a verified $a_t^{\star}$ still use the remaining verified cases in \eqref{eq:anytime}.
Hence, consistent with the simulation results in Fig.~\ref{fig:e4}, xTRUCE still preserves its safety checks and late-result rejection on a real O-RAN stack.

\section{Conclusions}\label{sec:concl}
In this paper, we proposed the xTRUCE system, a provable realization for a multi-xApp conflict-mitigation function in the emerging agentic O-RAN.
It confined untrusted xApp agents to structured proposals, arbitrated these proposals in two stages along the operator priorities in a provably safe way, and returned a verified gNB control action via E2 and an exact conflict certificate for LLM renegotiation.
We finally implemented its multi-process prototype and evaluated it through both simulations and OTA experiments.
Results demonstrated a variety of guarantees, covering safe gNB control under severely hallucinated proposals, priority-consistent KPI target satisfaction under resource overload, certificate-guided recovery of infeasible LLM intents across model vendors, and delay-safe per-epoch execution.
This work can serve as a pioneer in exploring safe multi-xApp control for agentic O-RAN, where the Near-RT RIC enforces physical limits, operator priorities, and timely execution without relying on the correctness of individual xApps.
Since xTRUCE addresses only multi-xApp conflicts and treats interference measured in the previous epoch as constant, extending it to LLM-driven multi-rApps at the Non-RT RIC and incorporating dynamic interference drift would be two promising directions for future work.

\appendices
\section{Proof of Theorem~\ref{th:priority}}
We fix epoch $t$ throughout this proof.
For any $a_t\in\Hset_t$, the smallest feasible slack is $\xi_{i,m}=[g_{i,m}(a_t,s_t)]_+$, hence $\mathbf{P1}$ can be analyzed through $V_\ell(a_t)$.
Let $\Hset^{(0)}=\Hset_t$, $v_\ell^{\star}=\min_{a_t\in\Hset^{(\ell-1)}}V_\ell(a_t)$, and $\Hset^{(\ell)}=\{a_t\in\Hset^{(\ell-1)}\mid V_\ell(a_t)=v_\ell^{\star}\}$ for every $\ell\in\Lset$.
The stated compactness of $\Hset_t$ ensures that every minimum is attained and every $\Hset^{(\ell)}$ is nonempty.
If any $a_t\in\Hset_t$ yields a lexicographically smaller vector, then at the first differing class $\ell_0$, it would belong to $\Hset^{(\ell_0-1)}$ while satisfying $V_{\ell_0}(a_t)<v_{\ell_0}^{\star}$.
This contradicts the definition of $v_{\ell_0}^{\star}$, thereby the Stage-I vector is uniquely lexicographically minimal.
In addition, any action in $\Hset^{(L)}$ with its smallest feasible slacks is feasible for $\mathbf{P2}$ for arbitrary $\epsilon_\ell\geqslant0$.
Every feasible pair $(a_t,\bm{\xi})$ of $\mathbf{P2}$ satisfies $V_\ell(a_t)\leqslant V_\ell(\bm{\xi})\leqslant v_\ell^{\star}+\epsilon_\ell$, so the same bound holds for $a_t^{\star}$.
If all $\epsilon_\ell=0$ and $(V_1(a_t^{\star}),\cdots,V_L(a_t^{\star}))$ differs from $(v_1^{\star},\cdots,v_L^{\star})$, its first differing component would be smaller, contradicting the lexicographic minimality proved above.
Hence, $V_\ell(a_t^{\star})=v_\ell^{\star}$ for every $\ell$ when all $\epsilon_\ell=0$.
This completes the proof.

\section{Proof of Theorem~\ref{th:convex}}
\emph{Convexity:}
First, given fixed $s_t$, the simplified function $f(p)=W\log_2\big(1+G_{u,k}(t)\,p/(\sigma^2+I_{u,k}(t))\big)$ is obviously concave in $p\geqslant 0$.
Then, each summand of \eqref{eq:rate} equals the perspective $x f(p/x)$, which is jointly concave in $(x,p)$ for $x>0$ and extends concavely by $0$ at $x=0$, hence the rate $R_u(a_t,s_t)$ is concave.
In this case, every rate-target shortfall function is convex, while $\big[Q_u(t)-\tau R_u(a_t,s_t)\big]_{+}+\lambda_u(t)$, i.e., the queue value in \eqref{eq:queue}, is also convex.
Meanwhile, the KPIs \eqref{eq:energy}, \eqref{eq:leak}, and \eqref{eq:load} are clearly linear w.r.t. $a_t^{\mathrm{rt}}$, as are all limits in \eqref{eq:phys} and $e_2$--$e_3$ in \eqref{eq:ramp}, while the rate limit $R_u\geqslant R_u^{\min}$ defines a convex set by the concavity of $R_u$.
Based on the above, the objective of \eqref{eq:p1inst} is a convex quadratic w.r.t. $\bm{\xi}$, whereas the objective of \eqref{eq:p2inst} is a weighted sum of squared positive-part terms of convex functions and the convex quadratic action-change term $D$.
Therefore, both objectives are convex and every constraint defines a convex set.

\emph{Strong duality and exact prices:}
After expressing $e_2$--$e_3$ as affine inequality pairs, meeting Slater's condition requires strictness only for $e_1$, (\ref{eq:p1inst}a), and the last constraint in (\ref{eq:p2inst}a).
Choosing $\bar\xi_u>[\theta_u-R_u(\bar a_t,s_t)]_+$ establishes refined Slater for \eqref{eq:p1inst}.
For \eqref{eq:p2inst}, let $a_t^{(1)}$ be a Stage-I minimizer with its minimal slacks, and define $a_t(\rho)=(a_t^{\mathrm{cfg}},(1-\rho)a_t^{(1),\mathrm{rt}}+\rho\bar a_t^{\mathrm{rt}})$.
Convexity of the fixed-configuration feasible set and concavity of $R_u$ make $a_t(\rho)$ feasible with $e_1$ strict.
As $\epsilon_1>0$, sufficiently small $(\rho,\delta)>0$ with $\xi_u=[\theta_u-R_u(a_t(\rho),s_t)]_++\delta$ make all remaining non-affine inequalities strict.
Thus, refined Slater holds~\cite{boyd2004convex}, proving strong duality, dual attainment, and the stated multiplier interpretation, which ends the proof.

\bibliographystyle{IEEEtran}
\bibliography{refs}

\end{document}